\newtheorem{thm}{Theorem}
\newtheorem{lem}[thm]{Lemma}
\newtheorem{cor}[thm]{Corollary}
\theoremstyle{definition}
\newcommand{\cD}{\mathcal{D}}
\newcommand{\cE}{\mathcal{E}}
\newcommand{\cF}{\mathcal{F}}
\newcommand{\cH}{\mathcal{H}}
\newcommand{\cL}{\mathcal{L}}
\newcommand{\cl}{\mathrm{cl}}
\newcommand{\bC}{\mathbb{C}}
\newcommand{\mc}{\mathcal}
\newcommand{\mbb}{\mathbb}
\newcommand{\cB}{\mathcal{B}}
\newcommand{\sF}{\mathsf{F}}
\DeclareMathOperator{\tr}{tr}
\DeclareMathOperator{\id}{id}
\DeclareMathOperator{\supp}{supp}
\newcommand{\ket}[1]{|#1\rangle}
\newcommand{\bra}[1]{\langle #1|}
\newcommand{\op}[2]{|#1\rangle\langle #2|}
\newcommand{\one}{\mbb{I}}
\newcommand{\oneE}{\one}
\newcommand{\psucc}{p_{\mathrm{succ}}}
\newcommand{\Fcl}{\sF_{\mathrm{cl}}}
\definecolor{cool_green}{rgb}{0.0, 0.5, 0.0}
\begin{document}

\title{On the Duality of Teleportation and Dense Coding}

\author[1,3]{Eric Chitambar\thanks{Email: \{\texttt{echitamb,leditzky}\}\texttt{@illinois.edu} }}
\author[2,3]{Felix Leditzky}

\affil[1]{Department of Electrical and Computer Engineering, University of Illinois Urbana-Champaign}
\affil[2]{Department of Mathematics, University of Illinois Urbana-Champaign}
\affil[3]{Illinois Quantum Information Science and Technology (IQUIST) Center, University of Illinois Urbana-Champaign}

\maketitle

\begin{abstract}
Quantum teleportation is a quantum communication primitive that allows a long-distance quantum channel to be built using pre-shared entanglement and one-way classical communication.
However, the quality of the established channel crucially depends on the quality of the pre-shared entanglement.
In this work, we revisit the problem of using noisy entanglement for the task of teleportation.  We first show how this problem can be rephrased as a state discrimination problem.  
In this picture, a quantitative duality between teleportation and dense coding emerges in which every Alice-to-Bob teleportation protocol can be repurposed as a Bob-to-Alice dense coding protocol, and the quality of each protocol can be measured by the success probability in the same state discrimination problem.  One of our main results provides a complete characterization of the states that offer no advantage in one-way teleportation protocols over classical states, thereby offering a new and intriguing perspective on the long-standing open problem of identifying such states. 
This also yields a new proof of the known fact that bound entangled states cannot exceed the classical teleportation threshold.  
Moreover, our established duality between teleportation and dense coding can be used to show that the exact same states are unable to provide a non-classical advantage for dense coding as well.  
We also discuss the duality from a communication capacity point of view, deriving upper and lower bounds on the accessible information of a dense coding protocol in terms of the fidelity of its associated teleportation protocol.
A corollary of this discussion is a simple proof of the previously established fact that bound entangled states do not provide any advantage in dense coding.
\end{abstract}

\section{Introduction}
Quantum teleportation \cite{bennett1993teleporting} is one of the most important protocols in quantum information theory.  It provides the backbone for a number of applications such as quantum communication \cite{Bennett-1996a}, long-distance entanglement distribution via repeaters \cite{Briegel-1998a}, and quantum computation \cite{Gottesman-1999a}.  From a fundamental perspective, teleportation beautifully demonstrates the interplay between different quantum resources: one communication primitive, a qubit channel, can be simulated using two other communication primitives, pre-shared entanglement and a two-bit classical channel.  Consequently, universal distributed quantum computation can be achieved using pre-shared entanglement along with local operations and classical communication (LOCC).  The rich study of entanglement and LOCC within quantum information science was largely due to this realization \cite{Horodecki-2009a}. 

The discovery of quantum teleportation was preceded by the dense coding protocol of Bennett and Wiesner~\cite{bennett1992communication}.
In this task, Alice and Bob use pre-shared entanglement and a noiseless qubit channel to simulate a two-bit classical channel. 
Thus teleportation and dense coding are dual to one another in terms of their resource consumption and communication objectives, an observation already made in  \cite{bennett1993teleporting}.
However, the duality extends even to the level of protocols.
As explained further in Sec.~\ref{sec:fidelity}, a general teleportation protocol (from Alice to Bob) consists of an entangled state shared between Alice and Bob, along with an encoding measurement performed by Alice and decoding state transformations applied by Bob.
As discussed in Sec.~\ref{sec:dense-coding}, these exact same elements (entangled state, measurement, state transformations) can also be used to define a dense coding protocol (from Bob to Alice), where now the state transformations serve as encoders of classical information while the measurement aims to decode this information. 

In the so-called ``tight'' setting (see Sec.~\ref{sec:dense-coding} for an explanation of this terminology) Werner~\cite{werner2001teleportation} showed that this duality leads to matching conditions of optimality: a teleportation protocol achieves perfect transmission of quantum information if and only if the associated dense coding protocol achieves perfect transmission of ``dense'' classical information. 
However, beyond this case of perfect transmission, a clear relation between figures of merit for these tasks has so far been missing.  
As one of our main results in this paper, we prove such a quantitative relationship in the general setting, recovering some of Werner's result along the way.

The foundational work of Horodecki et al.~\cite{horodecki1999singlet} established a direct relation between two quantities: the (average) fidelity of a teleportation protocol operating on bipartite states of equal local dimension, and the maximal ``singlet fraction'' of the entangled resource state.
The latter can be understood as a static version of the well-known entanglement fidelity of a quantum channel (see Sec.~\ref{sec:fidelity} for a definition).
The results of \cite{horodecki1999singlet} reduce the optimization of the average fidelity in teleportation to the potentially simpler optimization of the singlet fraction.
However, it obscures the duality between each teleportation protocol and its associated dense coding protocol, which we fully explore in this paper.


\subsection{Main results}

We first show in Lem.~\ref{lem:fidelity} that the entanglement fidelity of a general teleportation protocol is proportional to the success probability in an associated state discrimination problem defined in terms of the same data.
This generalizes the corresponding equivalence for port-based teleportation protocols \cite{ishizaka2008asymptotic,ishizaka2009quantum,beigi2011simplified} to completely general teleportation protocols.

Immediate corollaries of this result are a general bound on the fidelity in terms of the system dimensions (Cor.~\ref{cor:dim-bound}) and a short new proof of the known result \cite{horodecki1999singlet} that teleportation protocols operating on bound entangled states cannot exceed the classical fidelity achievable by any separable state (Cor.~\ref{cor:classical-fidelity}).

The latter statement is strengthened in Thm.~\ref{thm:beating-classical-limit}, in which we show that a bipartite state $\rho_{AB}$ can give rise to a non-classical teleportation fidelity if and only if there is a locally processed version of $\rho_{AB}$ violating the reduction criterion for separability.

We then use the operational connection between teleportation and state discrimination from Lem.~\ref{lem:fidelity} to prove a quantitative version of the duality between teleportation and dense coding.
This is achieved using two figures of merit for dense coding protocols: a classical analogue of the entanglement fidelity called classical correlation fidelity, and the accessible information of a quantum state ensemble.
We prove in Thm.~\ref{Prop:duality} an exact relationship between the classical correlation fidelity of a dense coding protocol and the entanglement fidelity of the associated teleportation protocol.
This result is used to show in Thm.~\ref{cor:beating-classical-limit-dense-coding} that a bipartite state gives rise to non-classical dense coding fidelity if and only if there is a locally processed version of the state violating the reduction criterion, in complete analogy to Thm.~\ref{thm:beating-classical-limit}.
Finally, in Thm.~\ref{thm:densecoding} we prove lower and upper bounds on the accessible information in terms of the fidelity of the corresponding teleportation protocol.
This result provides a new proof of some of the exact duality results of Werner~\cite{werner2001teleportation}, as well as a new proof of the result by Horodecki et al.~\cite{horodecki2001noisy} that bound entangled states do not provide an advantage in dense coding.

\subsection{Definitions and conventions}
Throughout the paper quantum systems are denoted by letters $A$, $B$, $C$, etc., and associated with finite-dimensional Hilbert spaces $\cH_A$, $\cH_B$, $\cH_C$, etc., whereas classical systems are denoted by letters $X$, $Y$, $Z$, etc.
The Hilbert space associated to a multipartite system $AB$ is defined as $\cH_{AB}\coloneqq \cH_A\otimes \cH_B$.
We use the notation $|A|\coloneqq \dim\cH_A$ for the dimension of a system $A$.
Two systems differing only by primes are assumed to be of the same dimension and hence isomorphic, e.g., $|C|=|C'|=|C''|$.
This also means that, given an operator $X_{C}$ acting on $C$, the operator $X_{C'}$ acting on $C'$ is also defined.
The space of linear operators acting on $\cH_A$ is denoted by $\cB(\cH_A)$.  
The identity operator on $\cH_A$ is denoted by $\oneE_A$, whereas the identity map on $\cB(\cH_A)$ is denoted by $\id_A$.
We denote by $X_{AB}^{T_{B}}$ the partial transpose of a bipartite operator.
A quantum state or density operator $\rho_A\in\cB(\cH_A)$ is a positive semidefinite operator of trace 1.
For a pure state $|\psi\rangle_A\in\cH_A$ we often write $\psi_A\equiv |\psi\rangle\langle\psi|_A$ for the associated density operator.
A quantum channel $\cE\colon A\to B$ is a linear, completely positive, trace-preserving map from $\cB(\cH_A)$ to $\cB(\cH_B)$.
We also use the alternative notation $\cE_{A\to B}$.
A positive operator-valued measure (POVM) on $A$ is a collection $\lbrace \Pi_A^i\rbrace_i$ of positive semidefinite operators satisfying $\sum_{i} \Pi_A^i = \oneE_A$.

\section{Teleportation and state discrimination}
\label{sec:fidelity}

\begin{figure*}
\centering
\includegraphics[width=\textwidth]{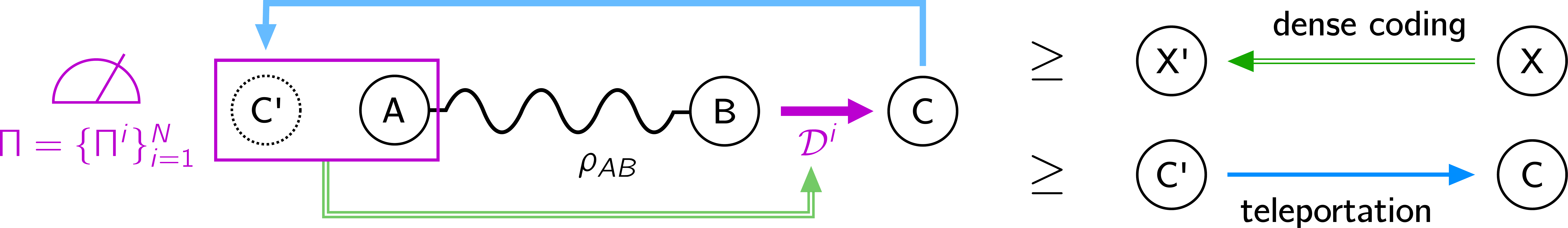} 
\caption{A graphical depiction of the operational duality of teleportation and dense coding, both operating on a given bipartite state $\rho_{AB}$ (wiggly line).
Read from left to right, the bottom half of the figure depicts a teleportation protocol, defined in terms of a measurement $\Pi = \lbrace \Pi^i\rbrace_{i=1}^N$ (purple box) and decoding operations $\lbrace \cD_{B\to C}^i\rbrace_{i=1}^N$ (purple arrow). 
Classical communication (lower green double arrow) goes from Alice (left) to Bob (right).
The protocol implements a quantum channel $C'\to C$ (lower blue arrow), indicated by the symbol $\geq$.
Read from right to left, the upper half of the figure depicts a dense coding protocol in terms of the same data, the decoding operations $\lbrace \cD_{B\to C}^i\rbrace$ and measurement $\Pi$.
The quantum channel (upper blue arrow) maps $C$ (Bob) to $C'$ (Alice).
It implements a dense classical channel $X\to X'$ (upper double green arrow), indicated by the symbol $\geq$. 
}
\label{fig:duality}
\end{figure*}

We start our discussion by showing that any teleportation protocol is equivalent to a quantum state discrimination problem: the fidelity in the former is proportional to the success probability in the latter.  This result generalizes the known relationship between fidelity and a state discrimination success probability for port-based teleportation, which was proved in \cite{ishizaka2008asymptotic,ishizaka2009quantum,beigi2011simplified}, and applied to generalized port-based teleportation protocols in \cite{kopszak2021multiport,strelchuk2021minimal}.

To prove our result, we first define general teleportation protocols, referring to Fig.~\ref{fig:duality} for a graphical depiction: Alice and Bob share a bipartite state $\rho_{AB}$, and in addition Alice controls a quantum system $C'$ in a state $\sigma_{C'}$ that she wants to teleport to Bob.
Note that the systems $A,B,C'$ are completely arbitrary and \emph{not} assumed to be of equal dimension.
Alice performs a measurement on $C'A$ given by a POVM $\Pi = \lbrace \Pi_{C'A}^i\rbrace_{i=1}^N$ for some $N\in\mathbb{N}$,
and communicates the outcome $i\in [N]\coloneqq \lbrace 1, \dots, N\rbrace$ to Bob over an $N$-message (noiseless) classical channel).
Finally, depending on the classical message $i$, Bob applies a decoding operation given by a quantum channel $\cD^i\colon B\to C$ to his system $B$.
This defines a $|C|$-dimensional teleportation protocol $(\rho_{AB},\lbrace\Pi^i\rbrace_{i=1}^N,\lbrace\cD^i\rbrace_{i=1}^N)$  
implementing the following quantum channel $\Lambda\colon C'\to C$, usually called \emph{teleportation channel}:
\begin{align}
	\Lambda(\sigma_{C'}) &= \sum_{i=1}^N (\tr_{C'A}\otimes \cD^i)\left[\left(\Pi_{C'A}^i\otimes \one_B\right) (\sigma_{C'}\otimes \rho_{AB}) \right]\notag \\
	&= \sum_{i=1}^N \tr_{C'A} \left[\left(\Pi_{C'A}^i\otimes \one_C\right) (\sigma_{C'}\otimes \omega^i_{AC}) \right],\label{eq:teleportation-channel}
\end{align}
where in the second line we defined the quantum states
\begin{align}
	\omega_{AC}^i = (\id_A\otimes \cD^i)(\rho_{AB}) \quad\text{for $i=1,\dots,N$.}\label{eq:omega-states}
\end{align}
The noise in a quantum channel is typically measured using the entanglement fidelity 
\begin{align}
	F\equiv F(\Lambda) \coloneqq \tr\left[ \Phi^+_{C''C} (\id_{C''}\otimes \Lambda)(\Phi^+_{C''C'}) \right],\label{eq:ent-fidelity}
\end{align}
where $|\Phi^+\rangle_{C''C} = |C|^{-1/2} \sum_{i=1}^{|C|} |i\rangle_{C''} |i\rangle_{C}$ for some orthonormal basis $\lbrace |i\rangle_C\rbrace_{i=1}^{|C|}$ is a maximally entangled state on $C''C$, and similarly for $C''C'$.
For simplicity, we will call $F=F(\Lambda)$ in \eqref{eq:ent-fidelity} simply the fidelity of a given teleportation protocol $(\rho_{AB},\lbrace\Pi^i\rbrace_{i=1}^N,\lbrace\cD^i\rbrace_{i=1}^N)$.

Our main result in this section relates this fidelity to the success probability of discriminating the $N$ states $\omega_{AC}^i$ defined in \eqref{eq:omega-states} with uniform prior probability using $\Pi = \lbrace \Pi^i\rbrace_{i=1}^N$.

\begin{lem}\label{lem:fidelity}
	Let $(\rho_{AB},\lbrace\Pi^i\rbrace_{i=1}^N,\lbrace\cD^i\rbrace_{i=1}^N)$ be a $|C|$-dimensional teleportation protocol as described above, giving rise to the teleportation channel $\Lambda$ defined in \eqref{eq:teleportation-channel}.
	Then the following relation holds:
	\begin{align}
 		F(\Lambda) = \frac{1}{|C|^2} \sum_{i=1}^N \tr\left(\Pi_{AC}^i \omega_{AC}^i\right) = \frac{N}{|C|^2} \psucc,\label{eq:fidelity}
	\end{align}
	where $\psucc = \dfrac{1}{N} \sum\limits_{i=1}^N \tr\left(\Pi_{AC}^i \omega_{AC}^i\right)$ is the success probability of the POVM $\Pi = \lbrace\Pi_{AC}^i\rbrace_{i=1}^N$ discriminating the states $\omega_{AC}^i$ drawn uniformly at random.
\end{lem}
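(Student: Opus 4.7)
The plan is to compute $F(\Lambda)$ by directly substituting the teleportation channel from~\eqref{eq:teleportation-channel} into the definition~\eqref{eq:ent-fidelity} and then simplifying the resulting trace via two standard identities for the maximally entangled state. My first step is to substitute the explicit form of $\Lambda$ and use that $\Phi^+_{C''C}$ and $\Pi^i_{C'A}$ act on disjoint subsystems (so commute as operators on $C''CC'A$), in order to rewrite the fidelity as
\begin{align*}
F(\Lambda) = \sum_i \tr\bigl[\Pi^i_{C'A}\,\Phi^+_{C''C}\,\Phi^+_{C''C'}\,\omega^i_{AC}\bigr],
\end{align*}
placing the two maximally entangled state projectors adjacent inside the trace over $C''CC'A$.

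The key ingredient is then the partial-trace identity
\begin{align*}
\tr_{C''}\bigl[\Phi^+_{C''C}\,\Phi^+_{C''C'}\bigr] = \frac{1}{|C|^2}\,\mathbb{F}_{CC'},
\end{align*}
where $\mathbb{F}_{CC'}$ denotes the swap operator on $C\otimes C'$; this is verified by a short calculation expanding each $\Phi^+$ in the computational basis. Since $\Pi^i_{C'A}$ and $\omega^i_{AC}$ act trivially on $C''$, they commute with the partial trace over $C''$, yielding
\begin{align*}
F(\Lambda) = \frac{1}{|C|^2}\sum_i \tr_{CC'A}\bigl[\Pi^i_{C'A}\,\mathbb{F}_{CC'}\,\omega^i_{AC}\bigr].
\end{align*}

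To finish, I apply the swap relation $\mathbb{F}_{CC'}\,\omega^i_{AC} = \omega^i_{AC'}\,\mathbb{F}_{CC'}$ (conjugation by $\mathbb{F}_{CC'}$ relabels $C\leftrightarrow C'$), followed by the identity $\tr_{CC'A}[X_{C'A}\,\mathbb{F}_{CC'}] = \tr[X_{C'A}]$, which follows from $\tr_{C}\mathbb{F}_{CC'} = \one_{C'}$. The net effect is that the swap inside the trace identifies the $C$-subsystem of $\omega^i$ with the $C'$-subsystem of $\Pi^i$, producing
\begin{align*}
\tr_{CC'A}\bigl[\Pi^i_{C'A}\,\mathbb{F}_{CC'}\,\omega^i_{AC}\bigr] = \tr\bigl[\Pi^i_{AC}\,\omega^i_{AC}\bigr]
\end{align*}
after the relabeling $C'\to C$ (valid since $|C'|=|C|$). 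Summing over $i$ establishes the first equality in~\eqref{eq:fidelity}; the second equality is immediate from the definition $\psucc = \tfrac{1}{N}\sum_i \tr[\Pi^i_{AC}\omega^i_{AC}]$ of the average success probability under the uniform prior.

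The underlying techniques (the ``transpose/swap trick'' for maximally entangled states) are entirely standard, so the main difficulty is not conceptual but notational: careful bookkeeping is needed across the four labels $C''$, $C$, $C'$, $A$ to track which operators commute, how the partial trace over $C''$ interacts with operator products, and precisely how the swap-induced identification $C'\to C$ operates at each step.
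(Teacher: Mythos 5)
Your proof is correct and follows essentially the same route as the paper's: a direct algebraic evaluation of $\tr[\Phi^+_{C''C}\,\Lambda(\Phi^+_{C''C'})]$ using standard maximally-entangled-state identities. The only (cosmetic) difference is that you contract the two projectors via $\tr_{C''}[\Phi^+_{C''C}\Phi^+_{C''C'}]=|C|^{-2}\mathbb{F}_{CC'}$ and then relabel with the swap, whereas the paper applies the transpose trick twice to move $\Pi^i_{C'A}$ onto the $AC$ systems before tracing out $C'$ and $C''$ --- two equivalent bookkeeping conventions for the same tensor contraction.
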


\begin{proof}
	The proof is a direct generalization of the one given in \cite{beigi2011simplified} for port-based teleportation, and uses the transpose trick: Let $C''$ be a copy of the system $C$ with dimension $|C|$, and consider a maximally entangled state $|\Phi^+\rangle_{C''C'} = |C|^{-1/2}\sum_{i=1}^{|C|} |i\rangle_{C''}|i\rangle_{C'}$ on $C''C'$. 
	For any $X\in\cL(\cH_{C'})$ we have
	\begin{align}
		\begin{aligned}
			(\one_{C''}\otimes X_{C'})\Phi^+_{C''C'} &= (X_{C''}^T\otimes \one_{C'})\Phi^+_{C''C'}\\
			\Phi^+_{C''C'}(\one_{C''}\otimes X_{C'}) &= \Phi^+_{C''C'}(X_{C''}^T\otimes \one_{C'}).
		\end{aligned}
		\label{eq:transpose-trick}
	\end{align}
	We then use the expression \eqref{eq:teleportation-channel} of the teleportation channel $\Lambda$ to compute the entanglement fidelity in~\eqref{eq:ent-fidelity}.
	We suppress identity operators, identity maps, and certain tensor product symbols for better readability.
	\begin{align}
		F &= \tr\left[ \Phi^+_{C''C} \Lambda(\Phi^+_{C''C'}) \right]\\
		&= \sum_{i=1}^N \tr\left[ \Phi^+_{C''C} \tr_{C'A} \left(\Pi_{C'A}^i \left(\Phi^+_{C''C'}\otimes \omega^i_{AC}\right)\right) \right]\label{Eq:fidelity-eq2}\\
		&= \sum_{i=1}^N \tr\left[ \left(\Phi^+_{C''C} \otimes \Pi_{C'A}^i\right) \left(\Phi^+_{C''C'}\otimes \omega^i_{AC}\right) \right] \label{eq:partial-trace1}\\
		&= \sum_{i=1}^N \tr\left[ \Phi^+_{C''C} (\Pi_{C''A}^i)^{T_{C''}} \left(\Phi^+_{C''C'}\otimes \omega^i_{AC}\right) \right] \label{eq:transpose1}\\
		&= \sum_{i=1}^N \tr\left[ \Phi^+_{C''C} \Pi_{AC}^i \left(\Phi^+_{C''C'}\otimes \omega^i_{AC}\right) \right] \label{eq:transpose2}\\
		&= \frac{1}{|C|^2} \sum_{i=1}^N \tr\left(\Pi^i_{AC} \omega^i_{AC}\right). \label{eq:partial-trace2}
	\end{align}
	In \eqref{eq:partial-trace1} we used the identity $\tr(X_A Y_A)=\tr((X_A\otimes\one_B)Y_{AB})$, and in \eqref{eq:transpose1} and \eqref{eq:transpose2} we used the transpose trick \eqref{eq:transpose-trick} with respect to the $C'$ and $C''$ systems, respectively.
	Finally, \eqref{eq:partial-trace2} follows from taking partial traces first over $C'$ and then over $C''$, each time using the fact that $\tr_{C'}\Phi^+_{C''C'} = \frac{1}{|C|}\one_{C''}$, and similarly for $\Phi^+_{C''C}$.
	Using $p = \frac{1}{N} \sum_{i=1}^N \tr\left(\Pi_{AC}^i \omega_{AC}^i\right)$ establishes the second identity in \eqref{eq:fidelity}.
\end{proof}

The relation between fidelity and success probability in Lem.~\ref{lem:fidelity} is suggestive of the operational equivalence between teleportation and dense coding (see Fig.~\ref{fig:duality}): the states $\omega_{AC}^i$ result from Bob encoding classical information (the label $i$) in his system via the quantum operations $\cD^i\colon B\to C$.
Sending the $C$ system through the noiseless quantum channel to Alice, she aims to decode the message by performing a measurement $\Pi = \lbrace \Pi_{AC}^i\rbrace$.
We will prove a quantitative version of this operational equivalence in Sec.~\ref{sec:dense-coding} below.

A simple consequence of Lem.~\ref{lem:fidelity} is the following general upper bound on the fidelity of any teleportation protocol:
\begin{cor}
\label{cor:dim-bound}
	For any $|C|$-dimensional teleportation protocol with associated teleportation channel $\Lambda\colon C'\to C$,
 its entanglement fidelity $F\equiv F(\Lambda)$ is bounded from above as
	\begin{align}
		F \leq \frac{|A|}{|C|}.
	\end{align}
	This bound is saturated, $F=|A|/|C|$, if and only if $\omega_{AC}^i$ is pure and proportional to $\Pi^i_{AC}$ for all $i=1,\dots,N$.
\end{cor}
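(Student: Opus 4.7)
The strategy is to combine the formula of Lemma~\ref{lem:fidelity} with two elementary operator inequalities: each $\omega^i_{AC}$ is a quantum state, hence $\omega^i_{AC}\leq \one_{AC}$ in the operator order, while each $\Pi^i_{AC}$ is positive semi-definite and the family $\{\Pi^i_{AC}\}_{i=1}^N$ still satisfies the POVM completeness relation $\sum_i \Pi^i_{AC}=\one_{AC}$. The latter deserves a moment's thought but requires no real work: the two applications of the transpose trick in the proof of Lemma~\ref{lem:fidelity} (first on $C'$, then on $C''$) compose to the identity map, so $\Pi^i_{AC}$ is simply the original POVM element $\Pi^i_{C'A}$ with the system $C'$ renamed to $C$, inheriting all POVM properties.

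With these facts in hand, the bound is immediate. From $\omega^i_{AC}\leq\one_{AC}$ and $\Pi^i_{AC}\geq 0$ I get $\tr(\Pi^i_{AC}\omega^i_{AC})\leq \tr(\Pi^i_{AC})$ for every $i$, and summing yields $\sum_i \tr(\Pi^i_{AC}\omega^i_{AC}) \leq \tr(\one_{AC}) = |A|\,|C|$. Plugging this into the identity $F=\tfrac{1}{|C|^2}\sum_i \tr(\Pi^i_{AC}\omega^i_{AC})$ from Lemma~\ref{lem:fidelity} produces the desired $F\leq |A|/|C|$.

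For the saturation condition, equality forces $\tr\bigl(\Pi^i_{AC}(\one_{AC}-\omega^i_{AC})\bigr)=0$ for each $i$. Since both factors are positive semi-definite, this in fact forces the operator product to vanish, so $\supp(\Pi^i_{AC})$ must lie in the unit eigenspace of $\omega^i_{AC}$. But a density operator admits a unit eigenvalue only when it is pure, in which case that eigenspace is the one-dimensional span of the pure state; hence $\Pi^i_{AC}\propto \omega^i_{AC}$. Conversely, if each $\omega^i_{AC}$ is pure and $\Pi^i_{AC}=\lambda_i\,\omega^i_{AC}$, then $(\omega^i_{AC})^2=\omega^i_{AC}$ gives $\tr(\Pi^i_{AC}\omega^i_{AC})=\lambda_i=\tr(\Pi^i_{AC})$, saturating the bound. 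The only mild obstacle is the POVM bookkeeping flagged in the first paragraph; otherwise the corollary falls out as a direct one-line estimate on top of Lemma~\ref{lem:fidelity}.
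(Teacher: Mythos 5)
Your argument is correct and follows essentially the same route as the paper: the bound comes from $\omega^i_{AC}\leq\one_{AC}$ plus the POVM completeness relation applied to Lemma~\ref{lem:fidelity}, and the saturation analysis likewise reduces each term $\tr\bigl(\Pi^i_{AC}(\one_{AC}-\omega^i_{AC})\bigr)$ to zero and deduces purity and proportionality (the paper sandwiches with $(\Pi^i_{AC})^{1/2}$ where you invoke the fact that two positive semidefinite operators with vanishing trace inner product have vanishing product, which is an equivalent step). Your explicit remark that the two transpose tricks compose to a mere relabeling of $C'$ as $C$, so that $\{\Pi^i_{AC}\}_i$ inherits the completeness relation, is a point the paper uses implicitly; both proofs also quietly ignore the degenerate case $\Pi^i_{AC}=0$, for which purity of $\omega^i_{AC}$ is not forced.
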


\begin{proof}
	The quantum states $\omega_{AC}^i = (\id_A\otimes \cD^i)(\rho_{AB})$ satisfy $\omega_{AC}^i \leq \one_{AC}$ for all $i=1,\dots, N$.
	Using this in the fidelity expression \eqref{eq:fidelity} gives
	\begin{align}
		F = \frac{1}{|C|^2} \sum_{i=1}^N \tr\left(\Pi_{AC}^i \omega_{AC}^i\right)
		&\leq \frac{1}{|C|^2} \sum_{i=1}^N \tr\left(\Pi_{AC}^i \right) \notag\\
		&= \frac{\tr\one_{AC}}{|C|^2} = \frac{|A|}{|C|},\label{eq:dim-bound}
	\end{align}
	where we used the completeness relation for the POVM $\lbrace\Pi_{AC}^i\rbrace_{i=1}^N$ in the second-to-last equality.
	
	Assume now that we have equality in eq.~\eqref{eq:dim-bound}, $\sum_{i=1}^N\tr (\Pi_{AC}^i \omega_{AC}^i )
	= \sum_{i=1}^N \tr (\Pi_{AC}^i )$, which can be rearranged to
	\begin{align}
		0 &= \sum_{i=1}^N \tr\left(\Pi^i_{AC} - \Pi^i_{AC}\omega_{AC}^i\right)\notag\\
		&= \sum_{i=1}^N \tr\left(\Pi^i_{AC} - (\Pi^i_{AC})^{1/2}\,\omega_{AC}^i\, (\Pi^i_{AC})^{1/2}\right).\label{eq:equality}
	\end{align}
	The operators $\Pi^i_{AC} - (\Pi^i_{AC})^{1/2}\,\omega_{AC}^i\, (\Pi^i_{AC})^{1/2}$ are positive semidefinite since $\omega^i_{AC}\leq \one_{AC}$, so that all traces in \eqref{eq:equality} are non-negative.
	Hence, the sum is zero iff $\tr\left(\Pi^i_{AC} - (\Pi^i_{AC})^{1/2}\,\omega_{AC}^i\, (\Pi^i_{AC})^{1/2}\right) = 0$ for all $i$.
	Since the trace of a positive semidefinite operator is zero iff the operator itself is zero, we have $\Pi^i_{AC} - (\Pi^i_{AC})^{1/2}\,\omega_{AC}^i\, (\Pi^i_{AC})^{1/2} = 0$, or in other words
	\begin{align}
		\omega_{AC}^i = P_{\Pi^i}.
	\end{align}
	Here, $P_{X}$ denotes the orthogonal projection onto the support $\supp(X)=(\ker X)^\perp$ of a Hermitian operator $X$.
	The $\omega^i_{AC}$ are normalized states, hence they can only be equal to a projection if their rank equals 1; in other words, $\omega^i_{AC}$ is pure for all $i$, and $\Pi^i_{AC} = \gamma_i \omega_{AC}^i$ for some $\gamma_i > 0$.
	The sufficiency of this condition for equality in \eqref{eq:dim-bound} is straightforward to check.
\end{proof}

The main use of Cor.~\ref{cor:dim-bound} lies in quantifying the deterioration of the teleportation fidelity of protocols attempting to teleport a $|C|$-dimensional system through an entangled state on $AB$ with $|C|\gg |A|$.
We also mention two situations in which the equality condition is satisfied:
First, let $|A|=1$ so that the resource state $\rho_{AB}$ is trivially separable.
In this case, choosing decoding operations $\cD^i_{B\to C}(X_B) = \tr(X_B)|i\rangle\langle i|_C$ where $\lbrace |i\rangle_C\rbrace_{i=1}^{|C|}$ is some orthonormal basis gives states $\omega_{AC}^i = |i\rangle\langle i|_C$.
With the POVM elements $\Pi^i_{AC} = \omega_{AC}^i = |i\rangle\langle i|_C$, we obtain a teleportation protocol achieving the classical teleportation fidelity $F = 1/|C|$; this is discussed in more detail in Sec.~\ref{sec:non-classical-teleportation}.
Another case of equality can be found when $|A|=|B|=|C|=d$ and $N=d^2$.
For a maximally entangled pure state $\rho_{AB}$ the $d^2$ Heisenberg-Weyl unitaries yield an orthonormal basis consisting of maximally entangled pure states $\omega_{AC}^i$ for $i=1,\dots, d^2$, which can be perfectly distinguished using the corresponding von Neumann measurement (i.e., we again have $\Pi^i_{AC}=\omega_{AC}^i$).
This is the $d$-dimensional version of the well-known perfect teleportation protocol \cite{bennett1993teleporting,werner2001teleportation}.
It is an interesting question to determine protocols based on partially entangled resource states $\rho_{AB}$ achieving a fidelity $1/|C| < F=|A|/|C| < 1$.

Traditionally, the figure of merit used to assess the quality of a $|C|$-dimensional teleportation channel $\Lambda$ has been the average transmission fidelity, 
$f(\Lambda):=\int d\psi\bra{\psi}\Lambda(\psi)\ket{\psi}$,
where $d\psi$ denotes the Haar integral over pure states $\ket{\psi}$ on $C$ \cite{Popescu-1994a}.
In the pioneering work of Horodecki et al.~\cite{horodecki1999singlet}, the average and entanglement fidelities were shown to be related as
\begin{align}
    f(\Lambda)=\frac{F(\Lambda)|C|+1}{|C|+1}.
\end{align}
Hence our results on the teleportation entanglement fidelity $F(\Lambda)$ can easily be translated into statements about the average teleportation fidelity $f(\Lambda)$.
We also note the recent independent work \cite{holdsworth2023quantifying} that derives an expression similar to eq.~\eqref{eq:fidelity} for the \emph{worst-case} fidelity of a general teleportation protocol.

\section{Non-Classical Teleportation and the Reduction Criterion}\label{sec:non-classical-teleportation}

In the distributed setting, $\rho_{AB}$ is typically considered a fixed shared resource, while Alice and Bob are free to choose the encoder $\{\Pi^i\}_i$ and decoders $\{\mc{D}^i\}_i$ in order to optimize the performance of their teleportation channel. 
Hence, it is natural to consider the quantity
\begin{align}
    \cF(\rho_{AB}; |C|)=\max_{\Lambda} F(\Lambda),
\end{align}
where the maximization is taken over all teleportation channels $\Lambda$ defined in terms of a $|C|$-dimensional teleportation protocol $(\rho_{AB},\lbrace \Pi^i\rbrace,\lbrace \cD^i\rbrace)$ via Eq.~\eqref{eq:teleportation-channel}. 
Horodecki et al.~\cite{horodecki1999singlet} showed that $\cF(\rho_{AB};|C|)$ is equivalent to the maximal singlet fraction of $\rho_{AB}$ obtained after processing $\rho_{AB}$ by one-way LOCC.
The fidelity expression in Lem.~\ref{lem:fidelity} reveals that $\cF(\rho_{AB}; |C|)$ involves a simultaneous maximization over encoder and decoders leading to a bilinear optimization problem, which is challenging to solve in general. 
In a future manuscript we will explore how upper bounds for $\cF(\rho_{AB}; |C|)$ can be efficiently computed, and are tight in some cases. 
We will also explore how symmetries of teleportation protocols can naturally be taken into account using the fidelity expression in \eqref{eq:fidelity}.

In the present work, we direct attention to the so-called classical lower bound on $\cF(\rho_{AB}; |C|)$.
Observe that by letting $\mc{D}^i$ be the CPTP that prepares the computational basis state $\op{i}{i}$ for $i=1,\cdots |C|$, i.e., $\mc{D}^i(X_B)=\tr[X_B]\op{i}{i}$ for all $X_B$, and by letting $\Pi^i_{AC'}=\mbb{I}_A\otimes\op{i}{i}_{C'}$ be the POVM effect that detects $\op{i}{i}_{C'}$, we obtain from Lem.~\ref{lem:fidelity} a fidelity
\begin{align}
	F = \frac{1}{|C|^2} \sum_{i=1}^{|C|} \tr\left(\Pi_{AC}^i \omega_{AC}^i\right)=\frac{1}{|C|^2} \sum_{i=1}^{|C|} \tr(\rho_A)=\frac{1}{|C|}.
\end{align}
Therefore, for all $\rho_{AB}$ we have
\begin{align}
\label{Eq:classical-bound}
\cF(\rho_{AB}; |C|)\geq \frac{1}{|C|} .
\end{align}
We refer to this as the \textit{classical bound} since it can be achieved using only classical resources; moreover, as we will prove below, it is tight whenever $\rho_{AB}$ is at most bound entangled, which includes all classical states.
Any state $\rho_{AB}$ for which $\cF(\rho_{AB}; |C|)> \frac{1}{|C|}$ can thus be deemed a legitimate quantum resource for teleportation, and we refer to any protocol exceeding the classical bound as an instance of \textit{non-classical teleportation}.\footnote{
	An alternative notion of non-classical teleportation has recently been proposed in the literature \cite{Cavalcanti-2017a}.  
	While this type of non-classicality is well-motivated from a quantum resource theory perspective, it refers to a task that is considerably different than simulating a noiseless quantum channel, which is the standard sense of teleportation and the object of study in this paper.
}
The fundamental question we consider here is what type of quantum states provide a resource for non-classical teleportation.  

To fully answer this question, let us first recall the reduction criterion for separability \cite{horodecki1999reduction}, which states that every separable (i.e., non-entangled) state $\rho_{AB}$ satisfies the operator inequality
\begin{align}
\rho_A\otimes\mbb{I}_B-\rho_{AB}\geq 0. \label{eq:reduction-criterion}
\end{align}
Even more, a violation of \eqref{eq:reduction-criterion} is a sufficient condition for $\rho_{AB}$ to be distillable \cite{horodecki1999reduction}.  It turns out that this inequality is also the key for identifying states supporting non-classical teleportation.  

As a warm-up, observe that Lem.~\ref{lem:fidelity} immediately implies that bound entangled states \cite{horodecki1998bound} cannot achieve strict inequality in the classical bound \eqref{Eq:classical-bound}.
To see this, let $\rho_{AB}$ be a bound entangled state, which is undistillable by definition.
Then the decoded states $\omega_{AC}^i = (\id_A\otimes \cD^i)(\rho_{AB})$ for $i\in[N]$ arising from a local quantum operation on the $B$ system are clearly also undistillable, so that they satisfy the reduction criterion, $\omega_{AC}^i \leq \omega_{A}^i \otimes \one_C = \rho_A \otimes \one_C.$
Crucially, the right-hand side of this operator inequality is independent of $i$, and so the sum in the fidelity expression \eqref{eq:fidelity} can be bounded as 
\begin{align}
	\sum\nolimits_i \tr\left(\Pi_{AC}^i \omega_{AC}^i\right) &\leq \sum\nolimits_i \tr\left(\Pi_{AC}^i  (\rho_A\otimes \one_C)\right)\notag\\ &=  \tr(\rho_A\otimes \one_C) = |C|.\label{eq:apply-reduction-criterion}
\end{align}
We have thus obtained an alternative proof for the well-known result \cite{horodecki1999singlet} that any teleportation protocol using a bound entangled state cannot exceed the classical bound.
\begin{cor}[\cite{horodecki1999singlet}]
	\label{cor:classical-fidelity}
	Any $|C|$-dim.~teleportation protocol using a bound entangled state cannot exceed a fidelity of $|C|^{-1}$.
\end{cor}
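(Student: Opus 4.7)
The plan is to combine the fidelity formula from Lem.~\ref{lem:fidelity} with the reduction criterion of Horodecki--Horodecki \cite{horodecki1999reduction}, essentially turning ``no distillability'' into an operator inequality that can be fed into the success-probability expression. The key structural observation is that each decoded state $\omega_{AC}^i = (\id_A\otimes\cD^i)(\rho_{AB})$ is obtained from $\rho_{AB}$ by a local quantum operation on Bob's side alone. Since bound entanglement is preserved (in the sense of remaining non-distillable) under local operations, every $\omega_{AC}^i$ is itself non-distillable.

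I would then apply the contrapositive of the known fact that any state violating the reduction criterion is distillable: each $\omega_{AC}^i$ must satisfy
\begin{align*}
\omega_{AC}^i \leq \omega_A^i \otimes \one_C.
\end{align*}
Because $\cD^i$ is trace-preserving and acts only on $B$, the marginal $\omega_A^i = \tr_C\omega_{AC}^i = \rho_A$ is \emph{independent of $i$}. This $i$-independence is the whole point: it is what makes the POVM completeness relation usable in the next step.

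Plugging the above bound into the fidelity identity \eqref{eq:fidelity} from Lem.~\ref{lem:fidelity} gives
\begin{align*}
F = \frac{1}{|C|^2}\sum_{i=1}^N \tr\left(\Pi_{AC}^i \omega_{AC}^i\right) \leq \frac{1}{|C|^2}\sum_{i=1}^N \tr\left(\Pi_{AC}^i (\rho_A\otimes \one_C)\right) = \frac{1}{|C|^2}\tr(\rho_A\otimes\one_C) = \frac{1}{|C|},
\end{align*}
where the middle equality uses $\sum_i \Pi_{AC}^i = \one_{AC}$ together with $\tr\rho_A = 1$.

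I do not anticipate any real obstacle here; the proof is almost fully written out in the paragraph preceding the statement. The only point that deserves a careful phrasing is the step that promotes non-distillability of $\rho_{AB}$ to non-distillability of every $\omega_{AC}^i$, and from there to the reduction-criterion inequality, since one needs to note that local processing on $B$ cannot create distillability and that the $A$-marginal is preserved. Once these are stated, the bound is a one-line consequence of Lem.~\ref{lem:fidelity} and POVM completeness.
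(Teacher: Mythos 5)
Your proposal is correct and follows exactly the argument the paper gives in the paragraph preceding Cor.~\ref{cor:classical-fidelity}: non-distillability of $\rho_{AB}$ is inherited by each $\omega_{AC}^i$ under Bob's local operation, the reduction criterion then yields $\omega_{AC}^i \leq \rho_A\otimes\one_C$ with an $i$-independent right-hand side, and POVM completeness in the fidelity formula of Lem.~\ref{lem:fidelity} gives $F\leq |C|^{-1}$. No differences worth noting.
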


We now show that Cor.~\ref{cor:classical-fidelity} can be strengthened considerably, providing a complete characterization of states achieving non-classical teleportation.
\begin{thm}
\label{thm:beating-classical-limit}
	A bipartite state $\rho_{AB}$ can attain a (one-way) teleportation fidelity $\cF(\rho_{AB};|C|)$ exceeding the classical fidelity $|C|^{-1}$ 
	if and only if there exists a CPTP map $\mc{E}\colon B\to C$ such that $\omega_{AC}=\id_A\otimes\mc{E}(\rho_{AB})$ violates the reduction criterion. 
\end{thm}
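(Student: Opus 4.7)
The plan is to deduce both directions from Lem.~\ref{lem:fidelity}. For the forward direction ($\Rightarrow$), the contrapositive of the computation establishing Cor.~\ref{cor:classical-fidelity} does the job: given any protocol $(\rho_{AB},\{\Pi^i\},\{\cD^i\})$ with $F>1/|C|$, the bound \eqref{eq:apply-reduction-criterion} would force $F\leq 1/|C|$ if every decoded state $\omega_{AC}^i=(\id_A\otimes\cD^i)(\rho_{AB})$ satisfied $\omega_{AC}^i\leq\rho_A\otimes\one_C$; hence some index $i^*$ must yield an $\omega_{AC}^{i^*}$ that violates the reduction criterion, and the required map is $\mathcal{E}:=\cD^{i^*}$.

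For the reverse direction ($\Leftarrow$), suppose $\omega_{AC}=(\id_A\otimes\mathcal{E})(\rho_{AB})$ violates reduction, witnessed by a unit vector $|v\rangle_{AC}=\sum_k\sqrt{\lambda_k}\,|a_k\rangle|c_k\rangle$ in Schmidt form. Following the Horodecki distillation idea, the filter $F_A=\sum_k\sqrt{\lambda_k}\,|c_k\rangle\langle a_k|$ (viewed as a map from $\mathcal{H}_A$ into a $|C|$-dimensional target space) satisfies $F_A^\dagger F_A\leq\one_A$ and extends to a two-outcome instrument with complementary Kraus operator $K_A=(\one_A-F_A^\dagger F_A)^{1/2}$. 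The plan is to build a teleportation protocol with a two-branch POVM on $C'A$: a success branch indexed by $j=1,\dots,|C|^2$, consisting of the Heisenberg--Weyl Bell basis (on $C'$ tensor the filter's image) pulled back through $F_A$, paired with decoders that apply $\mathcal{E}$ followed by the matching unitary correction on $C$; and a failure branch indexed by $j=1,\dots,|C|$, consisting of $|j\rangle\langle j|_{C'}\otimes K_A^2$, paired with classical decoders $\cD^{(0,j)}(X)=\tr(X)\,|j\rangle\langle j|_C$.

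Applying Lem.~\ref{lem:fidelity} to this protocol, the transpose-trick identity $(F_A^\dagger\otimes\one_C)|\Phi^+\rangle=|C|^{-1/2}|v\rangle_{AC}$ (for $|\Phi^+\rangle$ the MES in the filter's image tensor $C$) collapses the success-branch total to $|C|\,\langle v|\omega_{AC}|v\rangle$, while $\tr(F_A^\dagger F_A\,\omega_A)=\langle v|\omega_A\otimes\one_C|v\rangle$ reduces the failure branch to $|C|\bigl(1-\langle v|\omega_A\otimes\one_C|v\rangle\bigr)$. Summing these,
\[
F=\frac{\langle v|\omega_{AC}|v\rangle+1-\langle v|\omega_A\otimes\one_C|v\rangle}{|C|}>\frac{1}{|C|},
\]
with the strict inequality being precisely the reduction violation at $|v\rangle$. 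The principal obstacle is that the Horodecki filter is intrinsically probabilistic, so on its own it only yields a conditional boost in singlet fraction; the two-branch construction bypasses this, since the failure branch already attains exactly the classical threshold $1/|C|$ on its own and therefore cannot dilute the strict super-classical gain from the success branch. The possible dimension mismatch $|A|\neq|C|$ is handled automatically by letting $F_A$ be rectangular, isometrically embedding the Schmidt basis of $|v\rangle$ into a $|C|$-dimensional target space.
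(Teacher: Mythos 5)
Your proof is correct, and while the forward direction coincides with the paper's (the contrapositive of the argument behind Cor.~\ref{cor:classical-fidelity}, exploiting that the right-hand side $\rho_A\otimes\one_C$ of the reduction inequality is independent of the outcome index so the POVM sum collapses to $|C|$), your converse is a genuinely different construction. The paper stays inside the Schmidt support of the violating vector $|\varphi\rangle$: it forms the $r$ phase-rotated states $\ket{\alpha_n}=(U(n)\otimes\one)\ket{\varphi}$, uses the fact that their uniform average is diagonal to complete $\lbrace\tfrac1r\op{\alpha_n}{\alpha_n}\rbrace$ to a POVM with product operators, and pairs these with decoders $\mc{U}_n\circ\mc{E}$ and classical preparations, giving $N=|C|+r$ messages. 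You instead import the Horodecki filtering idea: a filter $F_A$ built from the Schmidt data of the witness, a generalized Bell measurement conjugated by the filter on the success branch ($|C|^2$ outcomes), and a classical fallback branch on filter failure ($|C|$ outcomes), giving $N=|C|^2+|C|$. Your key identities check out: $(F_A^\dagger\otimes\one_C)\ket{\Phi^+}=|C|^{-1/2}\ket{v}$, $\tr(F_A^\dagger F_A\,\rho_A)=\bra{v}\rho_A\otimes\one_C\ket{v}$, the two branches sum to $\one_{C'}\otimes F_A^\dagger F_A+\one_{C'}\otimes(\one_A-F_A^\dagger F_A)=\one_{C'A}$, and Lem.~\ref{lem:fidelity} yields exactly $F=|C|^{-1}\bigl(1+\bra{v}\omega_{AC}\ket{v}-\bra{v}\rho_A\otimes\one_C\ket{v}\bigr)>|C|^{-1}$ (using $\omega_A=\rho_A$). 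Your approach makes the connection to reduction-criterion-based distillation transparent and uses only the standard Bell measurement; the paper's is leaner in message count, which it exploits when quoting the explicit value of $N$ in the run-up to Thm.~\ref{cor:beating-classical-limit-dense-coding} --- though with your $N=|C|^2+|C|>|C|$ that downstream argument would still go through. One minor observation: your unnormalized filter succeeds with probability roughly $\lambda_{\max}^{-1}$ too rarely, so even a maximally entangled resource only yields $F=2/d-1/d^2$ rather than $1$; rescaling $F_A$ by $\lambda_{\max}^{-1/2}$ would sharpen the boost, but this is immaterial for the existence claim being proved.
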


\begin{proof}
	($\Rightarrow$) Suppose that $\omega_{AC}\leq \omega_A\otimes\oneE_C=\rho_A\otimes\oneE_C $ for all $\omega_{AC}=\id_A\otimes\mc{E}(\rho_{AB})$.  Hence for any teleportation protocol we have
	\begin{align}
		F=\frac{1}{|C|^2}\sum_i\tr[\Pi_i\omega_i] &\leq \frac{1}{|C|^2}\sum_i\tr[\Pi_i(\rho_A\otimes\oneE_C)] \notag\\ &=\frac{1}{|C|^2}\tr[\rho_A\otimes\oneE_C]=\frac{1}{|C|}.
	\end{align}
	
	($\Leftarrow$)  On the other hand, suppose there exists some CPTP map $\mc{E}\colon{B\to C}$ such that $\omega_{AC}$ violates the reduction criterion.  This requires the existence of a bipartite vector $\ket{\varphi}_{AC}$ such that $\bra{\varphi}\omega_{AC}\ket{\varphi}>\bra{\varphi}\rho_A\otimes\oneE_C\ket{\varphi}$.  We will use this inequality to construct a teleportation protocol that exceeds the classical threshold.  
	
	Let $\ket{\varphi}_{AC}=\sum_{i=0}^{r-1}\sqrt{\sigma_i}\ket{i}_A\ket{i}_C$ with $\sigma_i>0$ for $i=0, \dots, r-1$ be a Schmidt decomposition of $|\varphi\rangle$.  Introduce unitaries $U(n)$ that rotate the Schmidt basis $\{\ket{i}\}_i$ into a mutually unbiased basis,
	\begin{align}
		U(n)=\sum_{k=0}^{r-1}\omega^{kn}\op{k}{k},
	\end{align}
	with $\omega=e^{2\pi i/r}$ an $r$-th root of unity.  
	Observe the invariance $(U(n)\otimes U(n)^\dagger)\ket{\varphi}=\ket{\varphi}$ and define the states $\ket{\alpha_n}=(U(n)\otimes\oneE)\ket{\varphi}$ for $n=0,\dots,r-1$.  This implies that
	\begin{align}
		\bra{\alpha_n}\id_A\otimes\mc{U}_n(\omega_{AC})\ket{\alpha_n}>\bra{\alpha_n}\rho_A\otimes\oneE_C\ket{\alpha_n},
		\label{eq:phi-n-inequality}
	\end{align}
	where $\mc{U}_n(\rho)\coloneqq U(n)\rho U(n)^\dagger$.  
	A crucial property of the $\ket{\alpha_n}$ is that their ensemble average is classical,
	\begin{align}
		&\frac{1}{r}\sum_{n=0}^{r-1}\op{\alpha_n}{\alpha_n}\notag\\
		&=\frac{1}{r}\sum_{n=0}^{r-1}\sum_{k,k'=0}^{r-1}\omega^{n(k-k')}(\op{k}{k}\otimes\oneE)\op{\varphi}{\varphi}(\op{k'}{k'}\otimes\oneE)\notag\\
		&=\sum_{k=0}^{r-1}\sigma_k\op{k}{k}\otimes\op{k}{k},
	\end{align}
	where we used the orthogonality relation $\sum_{n=0}^{r-1} \omega^{n(k-k')} = r \delta_{k,k'}$ in the second equality.
	Hence, we can complete the set $\{\frac{1}{r}\op{\alpha_n}{\alpha_n}\}_{n=0}^{r-1}$ by the product operators
	\begin{align}
		\Pi_{r+n}=\sum_{\substack{k=1\\k\not=n}}^{|A|}\op{k}{k}\otimes\op{n}{n}+(1-\sigma_n)\op{n}{n}\otimes\op{n}{n}
	\end{align}
	for $n=0,\dots,r-1,$ so that $\sum_{n=0}^{r-1}(\tfrac{1}{r}\op{\alpha_n}{\alpha_n}+\Pi_{r+n})=\oneE_A\otimes\sum_{n=0}^{r-1}\op{n}{n}_C$.  Finally, let
	\begin{align}
		\Pi_{2r+n}=\mbb{I}_A\otimes\op{r+n}{r+n}
	\end{align}
	for $n=0,\cdots,|C|-r-1$.  
	The collection of positive semidefinite operators
	\begin{align}
		\left\{\tfrac{1}{r}\op{\alpha_n}{\alpha_n}\right\}_{n=0}^{r-1}\cup\{\Pi_{r+n}\}_{n=0}^{r-1}\cup \{\Pi_{2r+n}\}_{n=0}^{|C|-r-1}\label{Eq:POVM-protocol-full}
	\end{align}  
	constitutes a valid POVM.
	
	It remains to specify decoding operations for each operator corresponding to a measurement outcome:
	For each of the elements $\{\tfrac{1}{r}\op{\alpha_n}{\alpha_n}\}_{n=0}^{r-1}$ the corresponding decoder is the map $\mc{U}_n\circ\mc{E}$; for each of the elements $\{\Pi_{r+n}\}_{n=0}^{r-1}\cup \{\Pi_{2r+n}\}_{n=r}^{|C|-r-1}$ the corresponding decoder is the CPTP map that prepares the computational basis state $\op{n}{n}$, i.e., $\mc{D}_n(X)=\tr[X]\op{n}{n}$ for all $X$.  Since $\omega_{AC}=\mc{E}_{B\to C}(\rho_{AB})$, the fidelity of this protocol is given by
	\begin{align}
		&F=\frac{1}{|C|^2r}\sum_{n=0}^{r-1}\bra{\alpha_n}\id_A\otimes\mc{U}_n(\omega_{AC})\ket{\alpha_n}\notag\\
		&\quad {}+\frac{1}{|C|^2}\sum_{n=0}^{r-1}\tr[\Pi_{r+n}(\rho_A\otimes\op{n}{n})]+\frac{1}{|C|^2}\sum_{n=0}^{|C|-r-1}\tr[\rho_A]\notag\\
		&>\frac{1}{|C|^2r}\sum_{n=0}^{r-1}\bra{\alpha_n}\rho_A\otimes\oneE_C\ket{\alpha_n}\notag\\
		&\quad {} +\frac{1}{|C|^2}\sum_{n=0}^{r-1}\tr[\Pi_{r+n}(\rho_A\otimes\oneE_C)]+\frac{1}{|C|^2}\sum_{n=0}^{|C|-r-1}\tr[\rho_A]\notag\\
		&=\frac{1}{|C|^2}\tr[\rho_A\otimes\oneE_C]\notag\\
		&=\frac{1}{|C|},
	\end{align}
	where we used \eqref{eq:phi-n-inequality} in the inequality, proving the claim.
\end{proof}

%
Note that our definition of teleportation protocol includes any one-way LOCC pre-processing that might be performed on $\rho_{AB}$ before Alice and Bob actually decide to use their quantum resource for teleportation.
Given that  $\cF(\rho_{AB}; |C|)$ equivalently quantifies the largest singlet fraction obtainable by $\rho_{AB}$ after one-way LOCC processing \cite{horodecki1999singlet}, Thm.~\ref{thm:beating-classical-limit} thus says that $\rho_{AB}$ can achieve a singlet fraction exceeding $|C|^{-1}$ by one-way LOCC iff $\rho_{AB}$ can violate the reduction criterion after local processing on Bob's side. 

It is interesting to compare this result to the work of \cite{badziag2000local}, which shows that if a state's singlet fraction can exceed $|C|^{-1}$ after local processing on \textit{Alice's} side, then the original state $\rho_{AB}$ must already violate the reduction criterion.  In this case, the necessary condition stated in Thm.~\ref{thm:beating-classical-limit} for exceeding the classical bound $|C|^{-1}$  does not require an optimization over CPTP maps $\mc{E}\colon B\to C$.  However, in general such an optimization is needed.  Indeed, there are bipartite states that do not violate the reduction criterion, yet they can achieve a violation after processing on Bob's side.

Examples can be found in the family of Werner states \cite{werner1989epr}, defined to be $U\otimes U$-invariant bipartite states (for $U$ unitary).
More precisely, consider a Werner state on two $3$-dimensional systems,
\begin{align}
	\rho_\lambda=\frac{1}{24}[(3-\lambda)\oneE_3\otimes\oneE_3+(3\lambda-1)\mbb{F}_3],\label{eq:werner-state}
\end{align}
where $\oneE_3$ denotes the identity on $\mathbb{C}^3$ and $\mbb{F}_3$ denotes the swap operator acting on $\mathbb{C}^3\otimes \mathbb{C}^3$.
We have $\lambda = \tr(\rho_\lambda \mathbb{F})$, and $\rho_\lambda$ is entangled iff $\lambda<0$ \cite{werner1989epr}.
Furthermore, a direct calculation using the $U\otimes U$-symmetry shows that all Werner states on $\mathbb{C}^d\otimes \mathbb{C}^d$ satisfy the reduction criterion whenever $d>2$.

Let now $\lbrace |e_0\rangle,|e_1\rangle\rbrace$ and $\lbrace |f_0\rangle, |f_1\rangle,|f_2\rangle\rbrace$ be orthonormal bases for $\bC^2$ and $\bC^3$, respectively, and consider the quantum channel $\cE\colon \mc{B}(\bC^3)\to\mc{B}(\bC^2)$ with Kraus operators
\begin{align}
	P &= |e_0\rangle\langle f_0| + |e_1\rangle\langle f_1| & Q &= |e_0\rangle\langle f_2|.
\end{align}
Now the state $\sigma_\lambda = (\id\otimes \cE)(\rho_\lambda)\in\mc{B}(\bC^3\otimes \bC^2)$ violates the reduction criterion if and only if $\sigma_\lambda^{T_1} \ngeq 0$, where $T_1$ denotes the partial transpose on the first system \cite{horodecki1999reduction}.
Equivalently, we can check the eigenvalues of the operator 
\begin{multline}
	\id\otimes \cE \left(\rho_\lambda^{T_1}\right) = \frac{1}{24}\Big[(3-\lambda)\oneE_3\otimes(\oneE_2+\op{e_0}{e_0})\\
	{}+(3\lambda-1)(\phi_2^++\op{f_2}{f_2}\otimes\op{e_0}{e_0})\Big],
\end{multline}
where $|\phi_2^+\rangle = |f_0\rangle\otimes |e_0\rangle + |f_1\rangle \otimes |e_1\rangle$.
The smallest eigenvalue of this operator is equal to
\begin{align}
	7+3\lambda-\sqrt{13-30\lambda+37\lambda^2},
\end{align}
which is negative whenever $-1\leq\lambda < -3/7$.  
It thus follows that $\id\otimes\Lambda(\rho_\lambda)$ violates the reduction criterion in this range.

Theorem \ref{thm:beating-classical-limit} provides a criterion for testing whether a given bipartite state is useful for beating the classical bound in the one-way teleportation setting.  
The criterion can be explicitly expressed as a bilinear optimization problem, as we now show.  
For a given bipartite state $\rho_{AB}$, define its ``conditional state'' to be $\rho_{B|A}=\rho_A^{-1/2}\rho_{AB}\rho_{A}^{-1/2}$ \cite{Asorey2005,leifer2007conditional}, where we use the generalized inverse evaluated on the support of $\rho_A$.
The reduction criterion $\rho_{AB}\leq \rho_A\otimes \one_B$ can be rephrased in terms of the conditional state $\rho_{B|A}$ as $\rho_{B|A}\leq\mbb{I}_{\tilde{A}}\otimes\mbb{I}_B$, where $\tilde{A}$ is the support of $\rho_A$ (we will suppress the tilde from now on).
It follows that $\rho_{AB}$ violates the reduction criterion if and only if the largest eigenvalue of $\rho_{B|A}$ exceeds $1$.

Now consider any channel $\mc{E}_{B\to C}$ with Choi matrix $J_{BC}\coloneqq\id_B\otimes\mc{E}_{B'\to C}(\sum_{i,j=1}^{B}\op{ii}{jj}_{BB'})$ defined in terms of orthonormal bases $\lbrace |i\rangle_B\rbrace_{i=1}^{|B|}$ for $B$ and $\lbrace |j\rangle_{B'}\rbrace_{j=1}^{|B|}$ for $B'$.
By the Choi isomorphism, the action of $\mc{E}_{B\to C}$ on an input state $\chi_B$ is given by $\mc{E}_{B\to C}(\chi_B) = \tr_B\left(\chi_B^T J_{BC}\right)$, where $T$ denotes transposition in the chosen orthonormal basis for $B$ and we omitted identity operators for readability.
Hence, the action of $\mc{E}_{B\to C}$ on the $B$-part of $\rho_{B|A}$ can be written as 
\begin{align}
	\id_A\otimes \mc{E}_{B\to C}(\rho_{B|A}) = \tr_B \left(\rho_{B|A}^{T_B} J_{BC}\right).
\end{align}
Note that $\id_A\otimes \mc{E}_{B\to C}(\rho_{B|A})$ coincides with the conditional state of the output state $\id_A\otimes \mc{E}_{B\to C}(\rho_{AB})$.
The largest eigenvalue of $\id_A\otimes \mc{E}_{B\to C}(\rho_{B|A})$ can be expressed variationally as $\max_{\ket{\varphi}_{AC}}\bra{\varphi}\tr_B(\rho^{T_B}_{B|A}J_{BC})\ket{\varphi}_{AC}$, where the maximization is taken over all pure states on system $AC$.  
This maximum does not change when relaxing the optimization to all (not necessarily pure) density operators. 
Using the argument in the preceding paragraph, we thus arrive at the following optimization formulation of Theorem \ref{thm:beating-classical-limit}: a bipartite state $\rho_{AB}$ can exceed the $|C|$-dimensional classical teleportation bound iff $\lambda^*(\rho_{AB})>1$, where $\lambda^*(\rho_{AB})$ is defined as the solution to the following bilinear optimization problem:
\begin{align}
\text{maximize: }& \tr\left[\sigma_{AC}\tr_B\left(\rho^{T_B}_{B|A}J_{BC}\right)\right]\notag\\
\text{subject to: } &\tr_{C} J_{BC} =\mbb{I}_B;\notag\\
&\tr\sigma =1;\notag\\
&\sigma,J\geq 0.
\end{align}
Stated in this form, it becomes manageable to numerically explore the states which offer no non-classical advantage for teleportation, and we are currently pursuing such an investigation.


\section{Teleportation and dense coding}\label{sec:dense-coding}

We now make the operational duality between teleportation and dense coding concrete and quantitative.
To this end, we reinterpret a given $|C|$-dimensional teleportation protocol $(\rho_{AB},\lbrace\Pi^i\rbrace_{i=1}^N,\lbrace\cD^i\rbrace_{i=1}^N)$ as an $N$-message dense coding protocol from Bob to Alice, referring to Fig.~\ref{fig:duality} for a graphical depiction.
Bob encodes the message $i\in[N]$ into his system $B$ of the shared state $\rho_{AB}$ by applying $\cD^i\colon B\to C$, while Alice decodes the message by applying the POVM $\lbrace\Pi_{AC'}^i\rbrace_{i=1}^N$ to her systems $AC'$. 
Here it is assumed that a $|C|$-dimensional (noiseless) channel $\id_{C\to C'}$ connects Bob and Alice.
Such a protocol thus defines a classical channel $W\colon X\to X'$ characterized by the transition probabilities
\begin{align}
\label{Eq:transition-prob}
    p(j|i)=\tr(\Pi^j_{AC'}\omega^i_{AC'}),
\end{align}
with $\omega^i_{AC'}$ as defined in eq.~\eqref{eq:omega-states}.

We will discuss two different ways of assessing the quality of this classical channel, and hence the quality of a dense coding protocol.
The first one makes use of a classical correlation measure called ``classical correlation fidelity''.
The second one is by means of the classical capacity of the channel $W$, which coincides with the accessible information of the state ensemble defining the dense coding protocol.

The advantage of the classical correlation fidelity as a figure of merit is that it is a classical analogue of the entanglement fidelity in teleportation, allowing us to exhibit a concise quantitative operational duality between teleportation and dense coding.
On the other hand, the accessible information is an information-theoretic measure with a clear operational interpretation in terms of classical information transmission.
The two figures of merit can also be related to each other; see the discussion after Thm.~\ref{thm:densecoding}.

\subsection{Classical correlation fidelity}
For a quantum channel $\Lambda\colon C'\to C$, we can interpret $F(\Lambda)$ as a measure of how well $\Lambda$ preserves maximal \textit{coherent} correlations with a reference system $C''$.
For a classical channel $\tau_{\cl}\colon X\to X'$, we thus define a corresponding classical correlation measure $\sF(\tau_{\cl})$ called ``classical correlation fidelity'' that quantifies how well $\tau_{\cl}$ preserves maximal \textit{incoherent} correlations with a reference system $X''$:
\begin{align}
\sF(\tau_{\cl})\coloneqq \tr\left[\gamma^+_{X''X'}\id\otimes \tau_{\cl}(\gamma^+_{X''X})\right],
\end{align}
where $\gamma^+_{X''X}=\frac{1}{N}\sum_{i=1}^N\op{ii}{ii}_{X''X}$ is a perfectly correlated state on two $N$-level classical systems $X''$ and $X$.
We use a sans-serif font $\sF(\tau_{\cl})$ defined for a classical channel $\tau_{\cl}$ to distinguish this classical fidelity quantity from the quantum entanglement fidelity $F(\Lambda)$ defined for a quantum channel $\Lambda$.

Evaluating $\Fcl\equiv \sF(W)$ for the classical dense coding channel $W\colon X\to X'$ defined above with transition probabilities $p(j|i)$ given in eq.~\eqref{Eq:transition-prob}, we obtain
\begin{align}
    \Fcl=\frac{1}{N}\sum_{i=1}^Np(i|i) = \psucc,
\end{align}
where $\psucc$ is the success probability of discriminating the uniformly drawn states $\lbrace \omega_{AC}^i\rbrace$ using the POVM $\lbrace\Pi_{AC}^i\rbrace_{i=1}^N$, previously introduced in eq.~\eqref{eq:fidelity}.
This allows us to reformulate Lem.~\ref{lem:fidelity} as a quantitative link between between teleportation and dense coding protocols.
\begin{thm}
\label{Prop:duality}
Let $(\rho_{AB},\lbrace\Pi^i\rbrace_{i=1}^N,\lbrace\cD^i\rbrace_{i=1}^N)$ define either a $|C|$-dimensional teleportation protocol from Alice to Bob or an $N$-message dense coding protocol from Bob to Alice.
Then
\begin{align}
    F=\frac{N}{|C|^2}\Fcl,\label{eq:F-Fcl}
\end{align}
where $F=F(\Lambda)$ is the entanglement fidelity of the teleportation channel $\Lambda$ and $\Fcl = \sF(W)$ is the correlation fidelity of the classical dense coding channel $W$.
\end{thm}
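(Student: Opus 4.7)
The plan is to observe that the theorem is essentially a direct bookkeeping identity linking two quantities that have both been shown to equal the same state discrimination success probability $\psucc$, and then stitch the pieces together.

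First, I would recall the left-hand side of \eqref{eq:F-Fcl}. By Lem.~\ref{lem:fidelity}, the entanglement fidelity of the teleportation channel $\Lambda$ built from the data $(\rho_{AB},\{\Pi^i\},\{\cD^i\})$ satisfies
\begin{align*}
F(\Lambda) = \frac{N}{|C|^2}\,\psucc,
\end{align*}
where $\psucc = \tfrac{1}{N}\sum_i \tr(\Pi^i_{AC}\,\omega^i_{AC})$ is the uniform-prior discrimination success probability for the ensemble $\{\omega^i_{AC}\}$ with $\omega^i_{AC}=(\id_A\otimes\cD^i)(\rho_{AB})$, using the POVM $\{\Pi^i_{AC}\}$.

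Next I would evaluate the right-hand side, namely the classical correlation fidelity $\Fcl=\sF(W)$ of the dense-coding-induced classical channel $W\colon X\to X'$. Plugging the perfectly correlated state $\gamma^+_{X''X}=\frac{1}{N}\sum_i\op{ii}{ii}_{X''X}$ into the definition of $\sF$, the identity-on-$X''$ part only picks out the diagonal terms, so
\begin{align*}
\Fcl = \tr\left[\gamma^+_{X''X'}(\id\otimes W)(\gamma^+_{X''X})\right] = \frac{1}{N^2}\sum_{i,j} \delta_{ij}\,p(j|i) = \frac{1}{N}\sum_i p(i|i).
\end{align*}
Using the transition probabilities \eqref{Eq:transition-prob}, $p(i|i)=\tr(\Pi^i_{AC'}\,\omega^i_{AC'})$, this is precisely $\psucc$.

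Combining the two computations yields $F=\tfrac{N}{|C|^2}\Fcl$, which is the claimed identity \eqref{eq:F-Fcl}. There is no real obstacle: the content of the theorem has already been established in Lem.~\ref{lem:fidelity} and in the paragraph introducing $\Fcl$; the proof amounts to observing that the quantity $\Fcl$ equals $\psucc$ by direct evaluation on the correlated state $\gamma^+$, and then quoting Lem.~\ref{lem:fidelity}. If any subtlety deserves care, it is simply noting the normalization conventions in $\gamma^+_{X''X}$ (so that the $1/N^2$ prefactor combines correctly with the Kronecker deltas) and the fact that the POVM elements $\Pi^i_{AC'}$ and the states $\omega^i_{AC'}$ are the same objects reused across the two protocols, so the same $\psucc$ appears on both sides.
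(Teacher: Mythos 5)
Your proposal is correct and follows exactly the paper's route: the paper likewise evaluates $\Fcl = \frac{1}{N}\sum_i p(i|i) = \psucc$ directly from the definition of the classical correlation fidelity (in the paragraph preceding the theorem) and then combines this with Lem.~\ref{lem:fidelity}, which is why the theorem is stated without a separate proof. Your explicit computation of the $\delta_{ij}$ bookkeeping in $\tr[\gamma^+_{X''X'}(\id\otimes W)(\gamma^+_{X''X})]$ simply spells out what the paper leaves implicit.
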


Similar to teleportation, we can now define the measure
\begin{align}
    \cF_{\cl}(\rho_{AB};N, |C|)=\max_{W}\sF(W),\label{eq:opt-Fcl}
\end{align}
where the maximization is taken over all classical channels $W$ induced by an $N$-message dense coding protocol using a $|C|$-dimensional quantum channel connecting Bob to Alice.
Let us again identify a classical bound for this quantity.
Observe that, if $\rho_{AB}$ is separable, then so are the $\omega_{AC}^i$ states, and thus they satisfy the reduction criterion $\omega_{AC}^i \leq \omega_A^i \otimes \oneE_C = \rho_A\otimes \oneE_C$.
Using the same argument as in \eqref{eq:apply-reduction-criterion}, we then have 
\begin{align}
    \Fcl = \frac{1}{N}\sum_i p(i|i) \leq \frac{|C|}{N}.
\end{align}
In fact, this inequality holds for all bound entangled states.
On the other hand, the protocol described above eq.~\eqref{Eq:classical-bound} allows Bob to send $|C|$ messages to Alice over a $|C|$-dimensional quantum channel.
Therefore, we conclude that
\begin{align}
    \cF_{\cl}(\rho_{AB};N, |C|)\geq \min\left\{|C|/N,1\right\},
\end{align}
and this inequality is tight for all bound entangled states $\rho_{AB}$.
We identify this as the classical bound for dense coding, and any state for which the inequality is strict (for some values of $N$ and $|C|$) is a non-classical resource for dense coding.

Consider now any state satisfying the conditions of Thm.~\ref{thm:beating-classical-limit}.  
As shown in the proof of Thm.~\ref{thm:beating-classical-limit}, the teleportation protocol achieving $F>|C|^{-1}$ uses $N=|C|+r$ classical messages and a $|C|$-dimensional quantum channel.  
Hence, by Thm.~\ref{Prop:duality}, the correlation fidelity in the corresponding dense coding protocol satisfies
\begin{align}
    \Fcl =\frac{|C|^2}{N}F>\frac{|C|}{N}=\min\left\{|C|/N,1\right\}.
\end{align} 
We therefore obtain the following dual statement to Thm.~\ref{thm:beating-classical-limit}.
\begin{thm}
    \label{cor:beating-classical-limit-dense-coding}
    A bipartite state $\rho_{AB}$ can attain a dense coding fidelity $\cF_{\cl}(\rho_{AB};N, |C|)$ exceeding the classical fidelity $|C|/N$ 
	iff $|C|/N<1$ and there exists a CPTP map $\mc{E}\colon B\to C$ such that $\omega_{AC}=\id_A\otimes\mc{E}(\rho_{AB})$ violates the reduction criterion.
\end{thm}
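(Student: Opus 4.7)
The plan is to deduce this as an almost immediate consequence of the quantitative duality established in Thm.~\ref{Prop:duality} together with the teleportation characterization in Thm.~\ref{thm:beating-classical-limit}. The key observation is that the relation $F = \frac{N}{|C|^2}\Fcl$ lets us translate an exceedance of the dense coding classical bound $|C|/N$ into an exceedance of the teleportation classical bound $1/|C|$, and vice versa, so long as we track the parameter $N$ carefully.

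For the $(\Rightarrow)$ direction, suppose $\cF_{\cl}(\rho_{AB};N,|C|) > |C|/N$. Since fidelities are bounded by $1$ and the actual classical bound is $\min\{|C|/N,1\}$, we must have $|C|/N < 1$. Fix an optimal $N$-message dense coding protocol $(\rho_{AB},\{\Pi^i\},\{\cD^i\})$ attaining $\Fcl > |C|/N$. Reinterpreting the same data as a teleportation protocol and applying Thm.~\ref{Prop:duality}, the associated teleportation channel $\Lambda$ has entanglement fidelity
\begin{align}
F(\Lambda) = \frac{N}{|C|^2}\,\Fcl \; > \; \frac{N}{|C|^2}\cdot\frac{|C|}{N} = \frac{1}{|C|}.
\end{align}
Thm.~\ref{thm:beating-classical-limit} then directly produces the required CPTP map $\mc{E}\colon B\to C$ such that $\id_A\otimes\mc{E}(\rho_{AB})$ violates the reduction criterion.

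For the $(\Leftarrow)$ direction, assume $|C|/N<1$ and that some $\mc{E}\colon B\to C$ yields $\omega_{AC}=\id_A\otimes\mc{E}(\rho_{AB})$ violating the reduction criterion. By Thm.~\ref{thm:beating-classical-limit} there is a teleportation protocol achieving $F>1/|C|$. Inspecting the explicit construction in that proof, the POVM exhibited in \eqref{Eq:POVM-protocol-full} has exactly $2r+(|C|-r)=|C|+r$ outcomes, where $r$ is the Schmidt rank of the vector witnessing the violation. Thus the protocol naturally uses $N'=|C|+r$ messages; if $N\geq N'$, we can pad the protocol with zero POVM elements (and arbitrary decoders) to reach $N$ messages without changing $F$, and one checks that $N'\leq N$ can always be arranged since $|C|/N<1$ allows room for at least one extra message beyond $|C|$ (for smaller $N$ the claim is vacuous because the classical bound equals $1$). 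Applying Thm.~\ref{Prop:duality} in reverse to the padded protocol,
\begin{align}
\Fcl = \frac{|C|^2}{N}\,F \;>\; \frac{|C|^2}{N}\cdot\frac{1}{|C|} = \frac{|C|}{N},
\end{align}
so $\cF_{\cl}(\rho_{AB};N,|C|)>|C|/N$ as required.

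The only subtle point, and thus the main obstacle, is the bookkeeping of the message number $N$ in the $(\Leftarrow)$ direction: Thm.~\ref{thm:beating-classical-limit} specifies a particular $N'=|C|+r$, whereas the dense coding fidelity $\cF_{\cl}(\rho_{AB};N,|C|)$ is defined for a fixed $N$. This is handled by the padding argument above together with the remark that whenever $|C|/N\geq 1$ the stated inequality cannot hold, justifying the side condition $|C|/N<1$ in the theorem statement. Once this matching is done, everything reduces to plugging Thm.~\ref{thm:beating-classical-limit} into the identity of Thm.~\ref{Prop:duality}.
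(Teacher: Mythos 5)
Your proposal follows essentially the same route as the paper: combine the quantitative duality $F=\tfrac{N}{|C|^2}\Fcl$ of Thm.~\ref{Prop:duality} with the teleportation characterization of Thm.~\ref{thm:beating-classical-limit}. Your ($\Rightarrow$) direction is correct, and slightly slicker than the paper's, which re-runs the reduction-criterion bound $\sum_i\tr(\Pi^i\omega_{AC}^i)\le\tr(\rho_A\otimes\one_C)=|C|$ directly on $\Fcl$ rather than passing back through Thm.~\ref{thm:beating-classical-limit}; your observation that $\Fcl\le1$ forces $|C|/N<1$ is exactly how the side condition arises.

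The one place your argument does not hold up is your resolution of ``the only subtle point.'' You claim $N'\le N$ ``can always be arranged since $|C|/N<1$ allows room for at least one extra message beyond $|C|$.'' But $|C|/N<1$ only gives $N\ge|C|+1$, whereas the construction in Thm.~\ref{thm:beating-classical-limit} uses $N'=|C|+r$ messages with $r$ the Schmidt rank of the witness $\ket{\varphi}$. Since $\rho_A\otimes\one_C-\omega_{AC}$ is automatically nonnegative on product vectors (because $\bra{a}\bra{c}\omega_{AC}\ket{a}\ket{c}\le\bra{a}\omega_A\ket{a}$ for any unit vectors), every witness of a reduction-criterion violation has $r\ge2$, so the range $|C|<N<|C|+r$ is covered neither by your padding argument (which only goes upward from $N'$) nor by your vacuousness remark (the classical bound there is $|C|/N<1$, not $1$). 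The paper sidesteps this by reading the statement existentially in $N$ --- it simply exhibits the protocol at $N=|C|+r$, consistent with the preceding sentence ``for some values of $N$ and $|C|$.'' Under that reading your padding discussion is unnecessary rather than wrong; under a for-all-$N$ reading, both your proof and the paper's leave the intermediate values of $N$ unaddressed, and your claim that they are handled is the only genuinely incorrect step.
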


\subsection{Accessible information}

The accessible information is a measure of how much classical information can be retrieved from a quantum state ensemble $(\rho_x)_x$ with probabilities $(p_x)_x$ \cite{holevo1973bounds}.
To define it, denote by $H(X)=-\sum_i p_i \log p_i$ the Shannon entropy of a probability distribution $(p_i)_i$, and denote by $I(X;Y) = H(X)+H(Y)-H(XY)$ the mutual information of a pair of random variables $(X,Y)$, with the logarithm taken to base 2.
For a quantum state ensemble $\mathcal{E} = (p_x,\rho_x)_{x=1}^N$ and a POVM $M=\lbrace M_x\rbrace_{x=1}^N$, the accessible information $I(\cE,M)$ is defined as the mutual information $I(X;X')$, where $X$ takes values $x\in[N]$ with probability $p_{x}$, and $X'$ describes the outcome of sampling $\rho_{x}$ from $X$ and measuring with respect to $M$.
That is, the random variable pair $(X,X')$ has distribution $p_{XX'} = p_{X'|X}(x'|x) p_{X}(x)$ where $p_{X'|X}(x'|x) = \tr(M_{x'}\rho_{x})$ (compare this to \eqref{Eq:transition-prob}) and $p_{X}({x})=p_{x}$.
Optimizing $I(X;X')$ over the measurement $M$ yields the accessible information $I(\mathcal{E}) \coloneqq \max_{M} I(\mathcal{E},M)$ of the ensemble $\cE$.

We can now state the following quantitative relationship between teleportation and dense coding:

\begin{thm}\label{thm:densecoding}
	Let $(\rho_{AB},\lbrace\Pi^i\rbrace_{i=1}^N,\lbrace\cD^i\rbrace_{i=1}^N)$ be a teleportation protocol for which $\Pi$ is the optimal measurement maximizing the entanglement fidelity $F$ in \eqref{eq:fidelity} for the given set $\lbrace\cD^i\rbrace_i$ of decoding operations.
	Denote by $p\equiv \psucc = d^2F/N$ with $d\equiv |C|$ the success probability of the corresponding state discrimination problem of the ensemble $\cE = (1/N,\omega_{AC}^i)_{i=1}^N$.
	Then the accessible information $I(\cE)$ of the associated dense coding protocol satisfies
	\begin{multline}
		\log N - (1-p)\log(N-1) - h(p) \\ \leq I(\cE) \leq \log N + \log p,\label{eq:acc-Np}
	\end{multline}
	where $h(p) = -p\log p -(1-p)\log(1-p)$ is the binary entropy of $p$.
	In terms of the entanglement fidelity $F\equiv F(\Lambda)$ with $\Lambda$ the teleportation channel defined in \eqref{eq:teleportation-channel},
	\begin{multline}
		2\log d + \log F + (1-p)\left[\log(1-p)-\log(d^2F-p)\right] \\ \leq I(\cE) \leq 2\log d + \log F.\label{eq:acc-dF}
	\end{multline}
\end{thm}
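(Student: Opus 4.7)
My plan is to handle the two inequalities in \eqref{eq:acc-Np} separately and then obtain \eqref{eq:acc-dF} by substituting $p=d^2F/N$ at the end. Throughout, any $N$-outcome POVM $M$ on the ensemble $\cE=(1/N,\omega_{AC}^i)_{i=1}^N$ induces a joint distribution $p(x,x')=\frac{1}{N}\tr(M^{x'}\omega_{AC}^x)$ with uniform $X$-marginal, so $I(X;X')=\log N-H(X|X')$. Both bounds thus reduce to estimates on $H(X|X')$: Fano's inequality supplies an upper bound on $H(X|X')$ (yielding the lower bound on $I(\cE)$), and a min-entropy-type argument supplies a lower bound on $H(X|X')$ (yielding the upper bound on $I(\cE)$).

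For the lower bound on $I(\cE)$, I would simply substitute the suboptimal choice $M=\Pi$ into the definition $I(\cE)=\max_M I(X;X')$. By hypothesis $\Pi$ attains discrimination success probability $p$, so $\Pr(X'=X)=p$ and Fano gives $H(X|X')\le h(1-p)+(1-p)\log(N-1)$. Using $h(1-p)=h(p)$ and subtracting from $\log N$ delivers the claimed lower bound on $I(\cE)$.

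The upper bound on $I(\cE)$ is the substantive half. For an \emph{arbitrary} $N$-outcome POVM $M$, I would first establish the inequality $H(X|X')\ge -\log p_{\mathrm{MAP}}(M)$, where $p_{\mathrm{MAP}}(M)=\sum_{x'}p(x')\max_x p(x|x')$ is the success probability of the MAP decoder on the induced classical channel. This follows from the pointwise bound $H(X|X'=x')\ge H_\infty(X|X'=x')=-\log\max_x p(x|x')$ combined with Jensen's inequality applied to the convex function $-\log$. Next, I would observe that the MAP decoder is a deterministic classical post-processing, so the composite $\tilde{M}^x=\sum_{x':\,\hat{x}(x')=x}M^{x'}$ is again a valid $N$-outcome POVM on $AC$, and its state-discrimination success probability equals $p_{\mathrm{MAP}}(M)$. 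The given optimality of $\Pi$ for the same discrimination problem then forces $p_{\mathrm{MAP}}(M)\le p$, and combining yields $I(X;X')\le \log N+\log p$ for every $M$, hence for the $M$ maximizing $I(\cE)$. The main obstacle I anticipate is precisely this step: tying the classical MAP success probability back to the quantum optimum $p$ via the post-processing trick is what makes the optimality hypothesis on $\Pi$ do its job.

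Finally, substituting $p=d^2F/N$ from Lem.~\ref{lem:fidelity} into \eqref{eq:acc-Np} yields \eqref{eq:acc-dF}. The upper bound collapses immediately to $\log(d^2F)=2\log d+\log F$. For the lower bound, the essential simplification is the identity $N-1=(d^2F-p)/p$, which when substituted into $\log N-(1-p)\log(N-1)-h(p)$ and rearranged against $\log N+p\log p$ produces the prescribed form $2\log d+\log F+(1-p)[\log(1-p)-\log(d^2F-p)]$.
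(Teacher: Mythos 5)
Your proposal is correct, and the lower bound and the final substitution $p=d^2F/N$ are essentially identical to the paper's proof (Fano's inequality applied to the channel induced by $\Pi$, then the algebraic identity $N-1=(d^2F-p)/p$). The upper bound, however, takes a genuinely different route. The paper stays on the quantum side: it invokes the operational characterization of the conditional min-entropy, $p=\exp(-H_{\mathrm{min}}(X|AC)_\rho)$ for the cq-state $\rho_{XAC}=\frac{1}{N}\sum_i\op{i}{i}_X\otimes\omega^i_{AC}$ (which is where the optimality hypothesis on $\Pi$ enters), then bounds $I(\cE)\le I(X;AC)_\rho=\log N-H(X|AC)_\rho\le\log N-H_{\mathrm{min}}(X|AC)_\rho$ via quantum data processing and $H_{\mathrm{min}}\le H$. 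You instead work entirely classically after an arbitrary decoding POVM $M$ is applied: the pointwise bound $H(X|X'=x')\ge-\log\max_x p(x|x')$ plus Jensen gives $H(X|X')\ge-\log p_{\mathrm{MAP}}(M)$, and the coarse-graining $\tilde M^x=\sum_{x':\hat x(x')=x}M^{x'}$ turns the MAP decoder into an $N$-outcome POVM whose discrimination success probability equals $p_{\mathrm{MAP}}(M)$, so the optimality of $\Pi$ (which, since $F\propto\sum_i\tr(\Pi^i\omega^i)$, is optimality for the discrimination problem) forces $p_{\mathrm{MAP}}(M)\le p$. Your argument is self-contained and more elementary, avoiding the cited min-entropy machinery; it also sidesteps a slight imprecision in the paper, which writes $I(\cE)\equiv I(\cE,\Pi)$ even though $\Pi$ need not be the measurement optimizing the accessible information (the paper's bound survives because data processing holds for every measurement). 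What the paper's route buys is the intermediate quantum quantity $I(X;AC)_\rho$ and the general inequality $H_{\mathrm{min}}\le H$, which connect the result to standard one-shot information-theoretic tools. One cosmetic point: you restrict to $N$-outcome POVMs $M$, but your argument extends verbatim to POVMs with arbitrarily many outcomes (the coarse-grained $\tilde M$ still has $N$ outcomes), so nothing is lost if the accessible information is optimized over all measurements.
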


\begin{proof}
	We first prove the upper bound in \eqref{eq:acc-Np}.
	The optimal success probability of distinguishing the states $\lbrace \omega_{AC}^i\rbrace_{i=1}^N$ can be expressed as \cite{koenig2009operational}
	\begin{align}
		p = \exp\left(-H_{\mathrm{min}}(X|AC)_\rho\right).\label{eq:p-Hmin}
	\end{align}
	Here, the min-entropy 
	\begin{align}
		H_{\mathrm{min}}(A|B)_\rho \coloneqq - \inf_{\sigma_B} \inf\lbrace \lambda\colon \rho_{AB} \leq 2^\lambda \one_A\otimes\sigma_B\rbrace,
	\end{align} 
	with the first infimum over all states $\sigma_B$, 
	is evaluated on the classical-quantum state
	\begin{align}
		\rho_{XAC} = \sum_{i=1}^N \frac{1}{N} |i\rangle\langle i|_X\otimes \omega_{AC}^i.\label{eq:cq}
	\end{align}
	Recall furthermore that the min-entropy satisfies the relation
	\begin{align}
		H_{\mathrm{min}}(A|B) \leq H(A|B) = H(AB) - H(B),\label{eq:min-vN}
	\end{align}
	where $H(A)_\rho = -\tr\rho_A\log\rho_A$ denotes the von Neumann entropy of a quantum state $\rho_A$.
	
	The data-processing inequality with respect to the measurement $AC\to Y$ gives
	\begin{align}
		I(\mathcal{E}) \equiv I(\mathcal{E},\Pi) = I(X;Y) \leq I(X;AC)_\rho.
	\end{align}
	Furthermore, using the fact that $H(X)_\rho = \log N$ for our ensemble $\mathcal{E}$ together with \eqref{eq:p-Hmin} and \eqref{eq:min-vN}, we obtain
	\begin{align}
		I(X;AC)_\rho &= H(X)_\rho - H(X|AC)_\rho\notag\\
		& = \log N - H(X|AC)_\rho\notag\\
		&\leq \log N - H_{\mathrm{min}}(X|AC)_\rho\notag\\
		&= \log N + \log p,
	\end{align}
	which proves the upper bound in \eqref{eq:acc-Np}.
	
	For the lower bound, we employ Fano's inequality \cite{fano1952inequality}: Let $X,Y$ be random variables with joint distribution $p_{XY}(x,y)$, and let $\tilde{X} = f(Y)$ be a random variable over the same alphabet as $X$. 
	Defining the decoding error probability $p_e = \operatorname{Prob}(\tilde{X}\neq X)$, we have
	\begin{align}
		H(X|Y) \leq h(p_e) + p_e \log(|X|-1).
	\end{align}
	Choosing $X$ and $Y$ as in the paragraph above Thm.~\ref{thm:densecoding} and using $p = 1 - p_e$ as well as $h(p)=h(1-p)$, we obtain
	\begin{align}
		I(\cE) = I(X;Y) &= H(X) - H(X|Y)\notag\\ &\geq \log N - (1-p)\log(N-1) - h(p).
	\end{align}
	This concludes the proof of eq.~\eqref{eq:acc-Np}.
	The bounds in \eqref{eq:acc-dF} follow using the relation $F = Np/d^2$ from Lem.~\ref{lem:fidelity}.
\end{proof}

Since $\psucc = \Fcl$ by Thm.~\ref{Prop:duality}, the bounds in Thm.~\ref{thm:densecoding} relate our two chosen figures of merit, the classical correlation fidelity $\Fcl$ and the accessible information $I(\cE)$.

Thm.~\ref{thm:densecoding} generalizes some of the results of Werner~\cite{werner2001teleportation} on the duality between teleportation and dense coding protocols to arbitrary entanglement assistance.
To see this, let us first assume that we are in the ``tight'' regime $N=d^2$, where $N$ is the number of measurement outcomes and $d\equiv |C|$ is the dimension of the system to be teleported.
Then \eqref{eq:fidelity} implies that $F = \frac{N}{d^2}p = p$.
Hence, the teleportation protocol has perfect fidelity $F=1$ if and only if the associated state discrimination problem has perfect success probability $p=1$.
In this case, $I(\cE) = 2\log d$ by Thm.~\ref{thm:densecoding}, the maximal value of any dense coding protocol using a $|C|$-dimensional quantum channel.

In the more general case when $N$ and $d^2$ may be different, the second inequality in \eqref{eq:acc-dF} implies that any dense coding protocol achieving the maximal accessible information $I(\cE) = 2\log d$ corresponds to a perfect teleportation protocol with $F=1$.
Interestingly, the converse statement is not necessarily true, which can for example be observed in port-based teleportation protocols achieving a fidelity of $F$ arbitrarily close to 1 when $d$ is fixed and $N$ is large \cite{ishizaka2008asymptotic,ishizaka2009quantum,beigi2011simplified,studzinski2017port,mozrzymas2018optimal,christandl2021asymptotic}.
Here, the accessible information of the associated dense coding protocol vanishes in the limit $N\to\infty$ \cite{ishizaka2015remarks,strelchuk2021minimal}.

For certain port-based teleportation protocols, the following bound on $I(\cE)$ was stated (without proof) in \cite{strelchuk2021minimal}:
\begin{align}
	I(\cE) \leq \frac{d^2}{N}F \log d^2 + \log F.\label{eq:strelchuk-ub}
\end{align}
Using \eqref{eq:fidelity} we have $\frac{d^2}{N}F = p \leq  1$ (as a probability), and hence \eqref{eq:strelchuk-ub} may improve upon the upper bound on $I(\cE)$ in \eqref{eq:acc-dF}.
On the other hand, the bounds derived in Thm.~\ref{thm:densecoding} hold for arbitrary teleportation and dense coding protocols.

Thm.~\ref{thm:densecoding} and Cor.~\ref{cor:classical-fidelity} also provide an alternative and self-contained proof of the result of \cite{horodecki2001noisy} that bound entangled states do not provide an advantage in dense coding: 
From Cor.~\ref{cor:classical-fidelity}, any teleportation protocol using a bound entangled state has fidelity at most $1/d$.
It then follows from the upper bound \eqref{eq:acc-dF} of Thm.~\ref{thm:densecoding} that $I(\cE) \leq 2 \log d + \log F \leq 2\log d - \log d = \log d$.
Hence, we have proved the following statement:
\begin{cor}[{\cite{horodecki2001noisy}}]
	Any dense coding protocol defined in terms of a bound entangled state has accessible information at most $\log d$, where $d$ is the dimension of the system sent through the noiseless quantum channel.
\end{cor}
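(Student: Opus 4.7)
The plan is to chain together two ingredients already established in the paper: the classical upper bound on teleportation fidelity for bound entangled resources from Cor.~\ref{cor:classical-fidelity}, and the accessible-information upper bound from Thm.~\ref{thm:densecoding}, using the duality between dense coding and teleportation laid out in Sec.~\ref{sec:dense-coding}.

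First I would fix an arbitrary dense coding protocol $(\rho_{AB},\lbrace \cD^i\rbrace_{i=1}^N,\lbrace M_j\rbrace_{j=1}^N)$ in which $\rho_{AB}$ is bound entangled. The accessible information does not depend on the particular decoding POVM $\lbrace M_j\rbrace_j$: it is $I(\cE)=\max_{M} I(\cE,M)$ over all POVMs $M$, where $\cE=(1/N,\omega_{AC}^i)_{i=1}^N$ is the encoded ensemble built from $\omega_{AC}^i=(\id_A\otimes\cD^i)(\rho_{AB})$. It therefore suffices to upper bound $I(\cE)$ independently of the decoder in the original protocol.

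Next I would use the duality of Sec.~\ref{sec:dense-coding} to associate to the encoders $\lbrace\cD^i\rbrace_i$ the teleportation protocol obtained by choosing $\Pi=\lbrace \Pi^i\rbrace_i$ to be the POVM that maximizes the fidelity expression in Lem.~\ref{lem:fidelity}, equivalently the optimal state-discrimination POVM for the ensemble $\cE$. This choice satisfies the hypothesis of Thm.~\ref{thm:densecoding}, whose upper bound then gives
\begin{equation*}
I(\cE)\leq 2\log d+\log F,
\end{equation*}
where $d=|C|$ and $F$ is the entanglement fidelity of the associated teleportation channel.

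Finally I would invoke Cor.~\ref{cor:classical-fidelity}: since the teleportation resource $\rho_{AB}$ is bound entangled, $F\leq 1/d$. Plugging this in yields $I(\cE)\leq 2\log d-\log d=\log d$, which is exactly the claim. The only subtlety, really a bookkeeping point rather than a genuine obstacle, is verifying that the accessible information, being a maximum over all measurements, lets us freely choose the measurement that makes the hypothesis of Thm.~\ref{thm:densecoding} applicable; once this is granted, both Cor.~\ref{cor:classical-fidelity} and Thm.~\ref{thm:densecoding} apply directly and the argument closes without further computation.
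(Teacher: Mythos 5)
Your proposal is correct and follows essentially the same route as the paper: the authors likewise combine the upper bound $I(\cE)\leq 2\log d+\log F$ from Thm.~\ref{thm:densecoding} with the bound $F\leq 1/d$ from Cor.~\ref{cor:classical-fidelity} to conclude $I(\cE)\leq\log d$. Your additional remark that the accessible information is a maximum over all decoding POVMs, so one may freely pass to the optimal discrimination measurement required by the hypothesis of Thm.~\ref{thm:densecoding}, is a correct and slightly more careful handling of a point the paper leaves implicit.
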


\section{Conclusion}
In this work provide a new perspective on the dual tasks of teleportation and dense coding.
We argue that the known operational duality between these tasks extends to the level of protocols: the measurement and decoding operations in a teleportation protocol from Alice to Bob can be repurposed as a dense coding protocol from Bob to Alice.
The quantitative link between these tasks is provided by generalizing an expression for the teleportation fidelity in port-based teleportation to fully general protocols. 
This fidelity expression reformulates teleportation as a state discrimination problem that Bob and Alice aim to solve in the associated dense coding protocol.
We use this connection to give two different quantitative duality theorems between teleportation and dense coding: one in terms of a classical version of fidelity called ``classical correlation fidelity'', and one in terms of the information-theoretically relevant accessible information.
We also give new proofs of the established facts that bound entangled states do not provide any advantage over classical resources in either teleportation or dense coding.
In fact, we strengthen these results by showing that for both protocols a bipartite resource state can give a quantum advantage if and only if there exists a locally processed version that violates the reduction criterion.


In the dense coding scenario we assumed that the quantum channel connecting Bob to Alice is noiseless. 
It would be interesting to consider how the correlation fidelity changes in the presence of channel noise.  
A similar problem was recently studied in \cite{Chitambar-2023a}, which considered a setting where the noisy channel is fixed and Alice and Bob are allowed to choose an optimal entangled state $\ket{\varphi}_{AB}$ for the classical communication task.  
The optimal classical correlation fidelity $F_{\text{cl}}$ in this case can then be identified as a function of the given channel, known as its entanglement-assisted \textit{communication value}.  
We can thus interpret the classical correlation fidelity in \eqref{eq:opt-Fcl} as the entanglement-assisted communication value of a noiseless channel, with the assistance restricted to a non-optimal entangled resource state $\rho_{AB}$.

Another interesting future direction of research is to study teleportation in the more general setting of two-way LOCC, where classical communication is allowed both from Alice to Bob and vice versa.

\paragraph*{Acknowledgments}
We thank Jamie Sikora for helpful feedback. This research was supported by a grant through the IBM-Illinois Discovery Accelerator Institute.

\printbibliography[heading=bibintoc]

\end{document}